\newcommand{\cut}[1]{}
\definecolor{termcolor}{RGB}{102,138,201}
\newcommand{\term}[1]{\textcolor{termcolor}{\bf #1}}
\newtheorem{theorem}{Theorem}
\let\@copyrightspace\relax
\newcommand{\captionfonts}{\small}
\long\def\@makecaption#1#2{\baselineskip 10pt
   \vskip 10pt
   \setbox\@tempboxa\hbox{\textbf{#1: #2}}
   \ifdim \wd\@tempboxa >\hsize 
       \textbf{\captionfonts #1: #2}\par                
     \else                      
       \hbox to\hsize{\hfil\box\@tempboxa\hfil}
   \fi}
\begin{document}


\title{BGP Stability is Precarious}

\numberofauthors{1}

\author{\alignauthor P. Brighten Godfrey\\
\affaddr{University of Illinois at Urbana-Champaign}\\
\email{pbg@illinois.edu}}


\maketitle


\begin{abstract} \textnormal{We note a fact which is simple, but may be useful for the networking research community: essentially {\em any} change to BGP's decision process can cause divergence --- or convergence when BGP would otherwise diverge.} \end{abstract}


\section{Introduction}
\label{sec:intro}

The Internet's interdomain routing protocol, BGP~\cite{rfc1771}, uses a decision process to select a single best route to each destination when presented with multiple options.  This decision process can be customized and modified at each router to select routes that achieve various objectives such as load balance, path quality, or security.  When proposing such a modification, we were asked a very natural question: Given the known problem that a distributed network of BGP routers might never converge to a stable state~\cite{varadhan2000persistent}, might the proposed change make the problem worse?  That is, do there exist cases in which the standard BGP protocol converges, but the proposed modification causes divergence?

The Hippocratic goal to do no harm is natural to desire of any modification to BGP's decision process, given its global importance.  However, we observe here that {\em any} modification to the decision process can cause divergence in some case when standard BGP would converge, under very mild conditions.  Specifically, (1) the modified BGP must actually differ, in that there is some case where the modified and standard BGP both converge, but to different outcomes; and (2) the modification either may be deployed at only some routers, or the modification preserves the expressiveness of standard BGP (for example by maintaining the initial operator-configurable LOCAL\_PREF step).  Even seemingly trivial changes, like changing a tiebreaking step from ``lowest router ID'' to ``highest router ID'', satisfy these conditions and therefore may cause divergence.

But this fact should not incite fear of modifying BGP.  Indeed, any modification could also cause {\em convergence} when BGP would otherwise diverge. Thus, fear of modifying BGP can be equally matched with fear of {\em not} modifying BGP.

Instead, what the result points out is that the question of whether new cases of divergence \emph{could} happen by switching from decision process $A$ to $B$ is uninformative, because the answer is always ``yes'' for any distinct values of $A$ and $B$.  A more valuable question is how convergence is affected in realistic cases.  This, of course, is a much more difficult question to answer convincingly, not least because it requires assumptions about what is realistic.

\section{Model}
\label{sec:model}

\subsection{The standard model}

We follow the model of~\cite{griffin02stable}.\footnote{We omit \cite{griffin02stable}'s FIFO queues and permitted path sets, which other features of the model can emulate.}  An instance of the Stable Paths Problem (SPP) consists of a graph $G=(V,E)$ and a set $\lambda$ of \term{ranking functions}, one for each node $v\in V$.  Node $v$'s ranking function $\lambda_v$ specifies which paths $v$ prefers; specifically, if $\lambda_v(P_1) > \lambda_v(P_2)$ then $v$ prefers $P_1$ over $P_2$. We require that $\lambda_v(P_1) \neq \lambda_v(P_2)$ unless $P_1$ and $P_2$ have the same first edge (since BGP learns only a single route from each neighbor, we will never need to compare two such routes).

The ``null path'' $\varepsilon$ represents the absence of a path to the destination, and is considered a valid path.  Since BGP's decision process may eliminate a path $P$ due to import or export filters, we may have $\lambda_v(\varepsilon) > \lambda_v(P)$.  We write $P_1 P_2$ to denote the concatenation of two paths, or $vwP$ to concatenate the edge $(v,w)$ with path $P$.  


There is a single distinguished node $0$ to which all nodes are choosing paths.  At any given time $t$, each node $v$ has a current \term{path assignment} $\pi_t(v)$.  At all times $t$, we have $\pi_t(0) = 0$ (i.e., the destination always selects the trivial one-hop path to itself).  The dynamics of the protocol are modeled by a sequence of ``activations'' of nodes $A= (v_{i_1}, v_{i_2}, \ldots)$ in which each node (other than $0$) must appear infinitely often.  At time $t$, only node $A_t$ updates its selected route $\pi_t(A_t)$ and all other nodes are unaffected.  Specifically, if $v = A_t$, then $v$ chooses its new best route by setting \begin{equation} \pi_t(v) = \textrm{argmax}_{P \in choices(v,t)} \lambda_v(P), \label{eq:update}\end{equation} where $choices(v,t)$ is the set of all simple (non-loopy) paths of the form $vw\pi_t(w)$ where $w$ is a neighbor of $v$ and $\pi_t(w)$ is $w$'s current path.


A node $v$ is \term{stable} in path assignment $\pi$ if executing (\ref{eq:update}) produces no change.  A path assignment $\pi$ is stable if all nodes are stable in $\pi$.   An instance $(G,\lambda)$ is \term{safe} if any activation sequence eventually produces a stable path assignment, regardless of the initial path assignment.

\subsection{Modeling a modified decision process}

A ranking function $\lambda$ encapsulates the final result of the BGP decision process, whether that is due to an operator's assignment of the LOCAL\_PREF attribute for a route, or minimizing the AS\_PATH length, or any of the various other factors that affect the decision process.  So a ``modified BGP decision process'' is simply a different ranking function $\lambda'$.

But two ranking functions might be effectively equivalent, in that they produce the same outcome in practice.  The following definition rules out such degenerate modifications.

\begin{definition} Two ranking functions $\lambda,\lambda'$ are \term{safely distinct} if there exists a network $N$ for which $(N,\lambda)$ and $(N,\lambda')$ are safe, but their stable states differ.
\end{definition}

(Note that since the two instances are safe, they each have a single stable state~\cite{sami2009searching,jaggard11distributed}.)  As mentioned in the introduction, this definition restricts our attention to the case that there is some network on which $\lambda$ and $\lambda'$ are safe and converge to different outcomes.  While this appears to be a very mild restriction, it is conceivable that $\lambda$ and $\lambda'$ \emph{always} produce identical stable states \emph{except} on networks where at least one of them may diverge.  In that case, reasoning about differences in outcomes involves the system's dynamics, i.e., particular activation sequences.  It would be possible to use our technique to make statements about particular activation sequences, but we choose to avoid that complication here.

\section{Precariousness}
\label{sec:theorem}

\subsection{Partial deployment}

In this section we show that any safely distinct modification of the BGP decision process can cause divergence or convergence, when partially deployed.

But what exactly is a ``partial deployment'' of the modified decision process? Since a ranking function is defined for a specific network, how can we ``deploy'' it in a new environment where it may have to rank new paths?  Fortunately we can sidestep this modeling complication since we will need to use the ranking functions in only a black-box manner in our theorem.

Specifically, suppose we have ranking functions $\lambda^N$ and $\lambda^G$ on networks $N$ and $G$, respectively, and a given subgraph $N' \subseteq G$ is identical to $N$. Then a \term{partial deployment} of $\lambda^N$ in $(G,\lambda^G)$ is an instance $(G,\lambda^*)$ where
\[
\lambda_{v}^{*}(P)=\left\{ \begin{array}{ll}
\lambda_{v}^{G}(P) & \mbox{if }v\in G\setminus N'\\
\lambda_{v}^N(P) & \mbox{if }v\in N'\mbox{ and }P\subseteq N'\\
-\infty & \mbox{if }v\in N'\mbox{ and }P\not\subseteq N'.\end{array}\right.\]
In other words, the new ranking function $\lambda^*$ mimics $\lambda^G$ except on $N'$ where it mimics $\lambda^N$.  The third case causes $\lambda^*$ to rank any path outside $N'$ strictly less than $\varepsilon$, which ensures that $\lambda^N$ never is called upon to rank a path outside the network $N'$ on which it is well-defined.  This models a scenario in which nodes outside $N'$ export no BGP route advertisements to nodes in $N'$.

We can now state and prove the theorem.

\begin{theorem} \label{thm:partial} If $\lambda$ and $\lambda'$ are safely distinct, then there exists an SPP instance $(G,\lambda^G)$ in which a partial deployment of $\lambda$ is safe, but a partial deployment of $\lambda'$ has no stable path assignment.
\end{theorem}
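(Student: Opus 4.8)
The plan is to use the network guaranteed by safe distinctness as a \emph{control signal} that switches a divergent gadget on or off. By definition, since $\lambda$ and $\lambda'$ are safely distinct there is a network $N$ on which $(N,\lambda)$ and $(N,\lambda')$ are both safe but converge to different stable states $\sigma\neq\sigma'$. I would fix a node $d\in N$ at which the two assignments disagree, so that $R:=\sigma(d)$ and $R':=\sigma'(d)$ are distinct routes to $0$. Then I build $G$ by taking a copy $N'\subseteq G$ of $N$ and attaching a small outer gadget $H$, joined to the rest of the graph only through a new node $x$ adjacent to $d$. The ranking function $\lambda^G$ follows the partial-deployment template on $N'$ and is specified by us on $H$; the whole point is to design $H$ so that its convergence behaviour is dictated by which of $R,R'$ the node $d$ exports.

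The key observation that makes this possible is that the model only forces $\lambda_v(P_1)\neq\lambda_v(P_2)$ when $P_1,P_2$ begin with \emph{different} first edges, while two paths sharing a first edge may be ranked arbitrarily. Since $x$'s two possible routes through $d$, namely $xdR$ and $xdR'$, share the first edge $(x,d)$, I am free to set $\lambda^G_x(xdR)$ and $\lambda^G_x(xdR')$ to wildly different values. I would exploit this to make $x$'s route through $d$ \emph{attractive} exactly when $d$ exports $R'$ and \emph{unattractive} when $d$ exports $R$. The gadget $H$ is then wired so that $x$'s attractive route completes a dispute wheel among the outer nodes --- a cyclic preference structure with no stable assignment, in the spirit of the known BAD GADGET --- whereas $x$'s unattractive route causes $x$ to fall back on a route that breaks the cycle and leaves $H$ safe. (If one of $R,R'$ is the null path $\varepsilon$, the same effect is obtained even more directly, since then $x$ has a usable route through $d$ in one deployment and none in the other; the general ranking trick subsumes this case.)

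With $H$ in hand I would verify the two deployments. The decisive structural fact is that under either partial deployment the nodes of $N'$ rank every path leaving $N'$ below $\varepsilon$, so $N'$ is a \emph{closed} subsystem whose node choices depend only on one another, exactly as in the isolated instance; attaching $x$ and $H$ cannot perturb it. Hence $N'$ always converges to its unique stable state ($\sigma$ under $\lambda$, $\sigma'$ under $\lambda'$). For the partial deployment of $\lambda$: once $N'$ has settled to $\sigma$, $d$ permanently exports $R$, $x$'s $d$-route is unattractive, and $H$ converges for every subsequent activation order, so the whole instance is safe. For the partial deployment of $\lambda'$: any global stable assignment would restrict to a stable assignment of the closed subsystem $N'$, which by uniqueness must be $\sigma'$, forcing $d$ to export $R'$; but then $H$ would have to be stable under its attractive input, which by construction is impossible. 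Thus $(G,\lambda^G)$ under the partial deployment of $\lambda'$ has no stable path assignment at all, the strong conclusion the theorem demands.

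The main obstacle is the gadget design together with its two-sided verification: I must exhibit an $H$ that provably has \emph{no} stable assignment under the attractive input (``no stable state'' is stronger than mere unsafety, so unintended fixed points must be ruled out) while being provably safe under the unattractive input \emph{for every} activation order and internal starting configuration. A secondary subtlety is the interaction between the transient behaviour of $N'$ and $H$ before $N'$ settles: for the $\lambda$ deployment I must ensure that fluctuations of $d$'s exported route during convergence cannot leave $H$ permanently unstable, and for the $\lambda'$ deployment that no accidental global fixed point appears during the transient. Both are dispatched by the closedness of $N'$, which guarantees it settles after finitely many of its own activations independently of $H$, reducing each case to the behaviour of $H$ under a fixed input.
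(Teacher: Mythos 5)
Your plan is correct, and it reaches the theorem by a genuinely different construction than the paper's. The paper embeds \emph{two} copies of $N$: a reference copy that permanently runs $\lambda$, and a copy $N'$ that receives the partial deployment; the comparator node $x$ is attached to the distinguished node in \emph{both} copies, and its interleaved ranking $xwP_1 > xw'P'_1 > xwP_2 > \cdots$ makes $x$ prefer the reference copy's path exactly when the two copies agree. The node $a$ then escapes the Bad Gadget iff $x$ exports the reference path $xwP_i$. You instead use a \emph{single} copy and have $x$ discriminate directly between the two candidate stable routes $R$ and $R'$ at $d$, exploiting (correctly) that the model permits distinct ranks for paths sharing a first edge --- a freedom the paper's own construction also relies on when it orders $xwP_1 > xwP_2 > \cdots$. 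Since safe instances have unique stable states, $R$ and $R'$ are well defined at construction time, so hard-coding them into $\lambda^G_x$ is legitimate, and your closed-subsystem argument for $N'$ is exactly the mechanism the paper uses (implicitly) to handle transients and to rule out global fixed points. What your approach buys is a smaller construction; what the paper's two-copy comparator buys is that it generalizes directly to the ``transient differences'' extension, where one genuinely must compare two runs in real time rather than two known stable routes. Two remarks on what remains to be pinned down. First, you must still exhibit $H$ explicitly with its two-sided verification; this is routine (e.g., take the triangle $a,b,c$ with $a$'s only ``spoke'' being $axdR'$, so that the system is isomorphic to Bad Gadget when $x$ holds $xdR'$ and collapses to a safe instance when $x$ holds $\varepsilon$), but note that your chosen polarity --- divergence when the control route is \emph{present} --- is the opposite of the paper's and needs its own check that the degenerate gadget is safe from every initial configuration. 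Second, you could equally adopt the paper's polarity by making $xdR$ the attractive route that hands $a$ its escape path; either orientation works, and neither affects the validity of your argument.
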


One can interpret the theorem as follows.  If we let $\lambda$ be the behavior of standard BGP, then the partial deployment of $\lambda$ in $(G,\lambda^G)$ just means that the whole network runs standard BGP, and the modification $\lambda'$ causes divergence.  Symmetrically, we can just as easily let $\lambda'$ be the behavior of standard BGP, in which case the modification causes convergence.

\begin{proof} We construct $G$ as follows (Fig. \ref{fig:gadget1}). Since $\lambda$ and $\lambda'$ are safely distinct, there is a network $N$ on which their stable states differ.  We include in $G$ two copies of $N$ which we call $N$ and $N'$, but with only one instance of the destination $0$.  By the condition of the theorem, there must exist a $w \in N$ which has differing path selections in the stable states of $(N,\lambda)$ and $(N,\lambda')$.  Let $w'$ be the corresponding node in $N'$. We add a new node $x$ connected to $w$ and $w'$.  Finally, we add an ``oscillator gadget'' --- a triangle $a,b,c$ with each node connected to the destination $0$ --- and connect $a$ to $x$.

We construct $\lambda^G$ as follows. First, $\lambda^G_v = \lambda_v$ for all $v \in N$.  The behavior of $\lambda^G$ on $N'$ is irrelevant, since this is where we will place the partial deployment of $\lambda$ or $\lambda'$.

Second, $\lambda^G_x$ ranks paths as follows.  Let $P_1, \ldots, P_k$ be a list of all $w \leadsto 0$ paths in $N$, and let $P'_1, \ldots, P'_k$ be the corresponding $w' \leadsto 0$ paths in $N'$.  Without loss of generality, suppose that $P_1$ is $w$'s selected path in the stable state of $(N,\lambda')$, while $w$'s selected path in $(N,\lambda)$ is some other path $P_i$.  Then we let $\lambda^G_x(xwP_1) > \lambda^G_x(xw'P'_1) > \lambda^G_x(xwP_2) > \lambda^G_x(xw'P'_2) > \ldots > \lambda^G_x(xwP_k) > \lambda^G_x(xw'P'_k) \\ > \varepsilon,$ with all other paths ranked below $\varepsilon$.

Third and finally, on the oscillator gadget, $\lambda_G$ behaves like the classic Bad Gadget~\cite{griffin02stable}: each of $a,b,c$ will accept one of two paths, the direct path (e.g. $a0$) and the path via its counterclockwise neighbor (e.g. $ab0$), with the latter preferred.  However, to this structure we add the fact that $a$ most prefers the path $axwP_i$.

	\begin{figure}[t]
	\centering
	\includegraphics[width=2.6in]{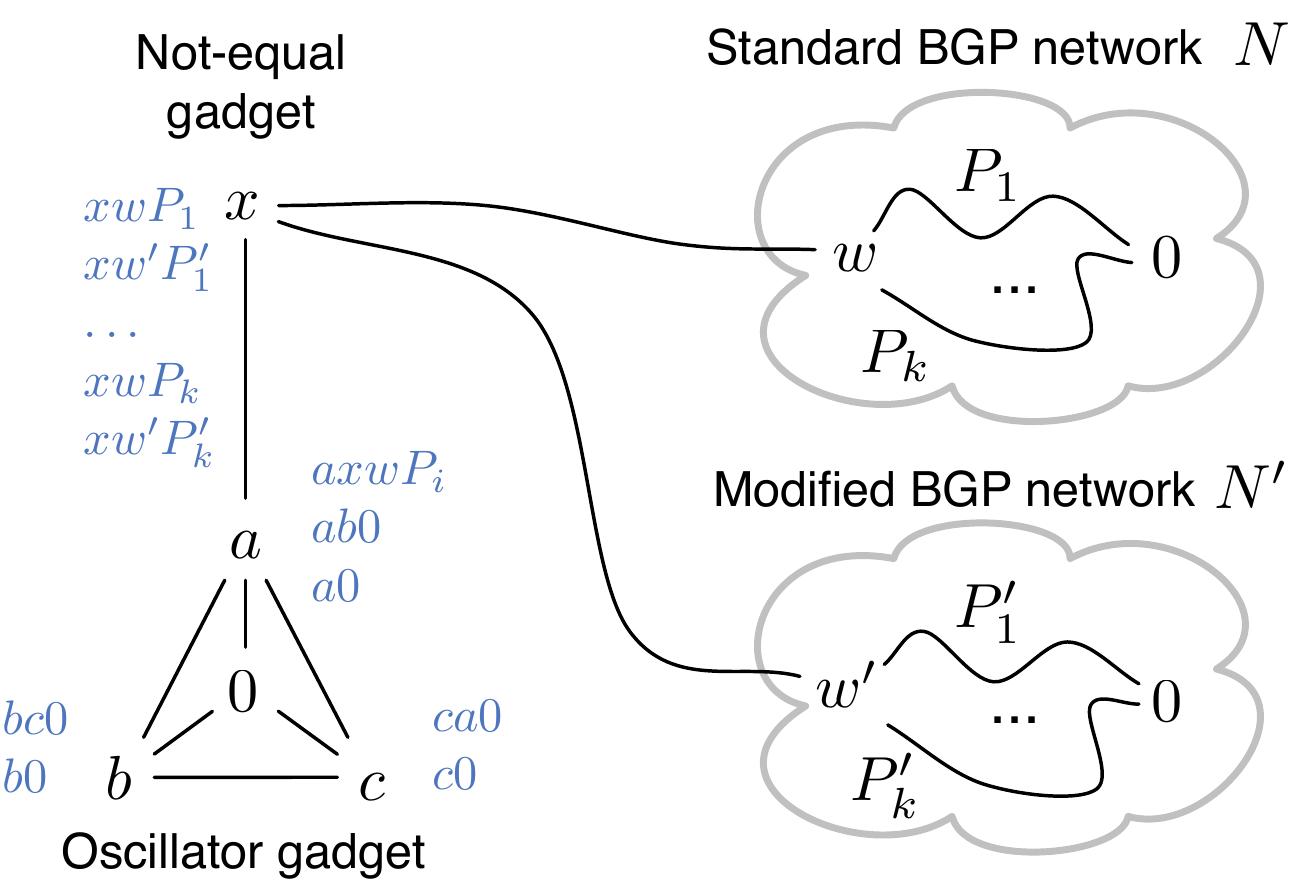}
	\caption{An SPP instance which converges if and only if $N$ and $N'$ have the same stable state.  The ranking function of certain nodes is written in blue next to the node, listing paths from most to least preferred. Multiple copies of the destination $0$ are drawn for clarity, but these are in fact the same node.}
	\label{fig:gadget1}
	\hrulefill
	\end{figure}

With a partial deployment of $\lambda$ on $N'$, the effect of the construction is as follows. Since $(N,\lambda)$ is safe, it must have a single stable state~\cite{sami2009searching,jaggard11distributed}, so $N$ and $N'$ will eventually stabilize with corresponding path selections $P_i$ and $P_i'$.  Therefore, since $x$ always prefers a path in $N$ over the corresponding path in $N'$, it will eventually select $xwP_i$ permanently, causing $a$ to select the path $axwP_i$, causing $c$ to select $c0$, and $b$ to select $bc0$.  Thus, a unique stable state is reached for any activation sequence.

On the other hand, consider a partial deployment of $\lambda'$ on $N'$. Since $(N',\lambda')$ is safe it must have a single stable state~\cite{jaggard11distributed}, which we know must differ from the stable state of $(N,\lambda)$. Thus, after $N$ and $N'$ converge, node $x$ is presented with two {\em different} paths, $xwP_i$ and $xw'P'_1$.  It will prefer the path $xw'P'_1$ and remain with that selection thereafter.  With $a$'s possibility of any more-preferred path via $x$ now eliminated, the nodes $a,b,c$ mimic the Bad Gadget, and have no stable state.  Therefore, with a partial deployment of $\lambda'$, this instance has no stable state. \hfill{}
\end{proof}

\subsection{Full deployment}

If the requirement of partial deployment were removed, Theorem~\ref{thm:partial} would no longer hold.  Consider, for example, a modified decision process which simply performs shortest path routing.  The theorem shows that a partial deployment of shortest path routing can cause divergence; however, a full deployment will always converge.

But the theorem holds with full deployments if we add a constraint on the modification: it must preserve the expressive power of BGP.  One way to formalize this is as follows.  The operator of each node $v$ specifies a \term{partial ranking function} $\hat{\lambda}_v$ which may assign multiple paths the same value.  A \term{decision process} is now a function $d$ which, given a partial ranking function $\hat{\lambda}_v$, returns a ranking function $d_{\hat{\lambda}_v}$  consistent with $\hat{\lambda}_v$ (that is, $\hat{\lambda}_v(P_1) > \hat{\lambda}_v(P_2)$ implies $d_{\hat{\lambda}_v}(P_1) > d_{\hat{\lambda}_v}(P_2)$).  Intuitively, $d$ breaks any ``ties'' in $\hat{\lambda}$'s ranking of paths.\footnote{Recall from the definition of a ranking function that $d_{\hat{\lambda}_v}$ might still have ties, but only between two routes that go through the same neighbor, which we never need to compare.}  We let $d_{\hat{\lambda}}$ refer to the set of ranking functions produced by applying $d$ to the set of partial ranking functions $\hat{\lambda}_v$ over all nodes $v$.

Taking common ISP business relationships as an example, an operator might specify $\hat{\lambda}_v(P) = 100$ for paths through $v$'s providers, $\hat{\lambda}_v(P) = 200$ for paths through peers, and $\hat{\lambda}_v(P) = 300$ for paths through customers.  If $v$ has multiple providers, peers, or customers, this will not always yield a unique best path; the decision process $d$ breaks those ties, perhaps by examining path length or other factors. However, the operator {\em can} choose to specify an arbitrary ranking function by giving $\hat{\lambda}_v(P)$ a distinct value for each $P$ (in which case $d$ does not affect the outcome). In this sense, any modified decision process preserves BGP's expressiveness.

We can now show a theorem analogous to Theorem \ref{thm:partial} without the partial deployment requirement.  The proof is an easy adaptation of our earlier technique: since the partial ranking function can be used to force any total order, we can  build the necessary ranking functions even though the modification is deployed at all nodes.

\begin{definition} Two decision processes $d,d'$ are \term{safely distinct} if there exists a network $N$ and partial ranking functions $\hat{\lambda}$ for which $(N,d_{\hat{\lambda}})$ and $(N,d'_{\hat{\lambda}})$ are safe, but their stable states differ.
\end{definition}

\begin{theorem} \label{thm:full} If $d$ and $d'$ are safely distinct, then there exists a network $G$ and partial ranking functions $\hat{\lambda}^G$ in which $(G,d_{\hat{\lambda}^G})$ is safe, but $(G,d'_{\hat{\lambda}^G})$ has no stable path assignment.
\end{theorem}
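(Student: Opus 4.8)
The plan is to reuse the gadget of Theorem~\ref{thm:partial} almost verbatim, exploiting the hint's observation: a partial ranking function assigning every path a \emph{distinct} value leaves a decision process no ties to break, so every decision process reproduces that total order exactly. I will pin down every node outside a single copy of $N$ with such a total order, so that the auxiliary structure behaves identically under $d$ and $d'$ and the only possible disagreement lives inside $N$. To begin, I invoke safe distinctness to obtain a network $N$ and partial rankings $\hat\lambda$ with $(N,d_{\hat\lambda})$ and $(N,d'_{\hat\lambda})$ safe but converging to different stable states, and I fix a node $w$ whose selected path is $P$ under $d$ and $P'\neq P$ under $d'$. I then form $G$ by attaching to one copy of $N$ the apparatus of Figure~\ref{fig:gadget1}: a relay $x$ joined to $w$, and a Bad Gadget triangle $a,b,c$ (each joined to $0$) with $a$ joined to $x$.

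Next I specify $\hat\lambda^G$. On $N$ I use $\hat\lambda$, except that every path leaving $N$ is ranked below $\varepsilon$; since a strict inequality survives any decision process, this keeps $N$ self-contained under both $d$ and $d'$, so its stable states inside $G$ coincide with the stand-alone ones. On $x,a,b,c$ I assign all-distinct values (a total order) so that $x$ ranks each $xwP_j$ above $\varepsilon$ and everything else below; $a$ ranks $axwP$ (the $d$-path through $x$) above its two gadget paths $ab0>a0$ but ranks $axwP'$ (the $d'$-path through $x$) below them; and $b,c$ take the usual cyclic Bad-Gadget preferences. Being already total, these rankings are reproduced verbatim by both $d$ and $d'$.

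The analysis is then immediate and mirrors Theorem~\ref{thm:partial}. Under $d$, $N$ converges with $w\to P$, so $x\to xwP$ and $a$ sees its top choice $axwP$, forcing $c\to c0$ and $b\to bc0$; the instance stabilises for every activation sequence, so $(G,d_{\hat\lambda^G})$ is safe. Under $d'$, any stable assignment must put $N$ in its unique $d'$-state with $w\to P'$, whence $x\to xwP'$ and $a$'s only path through $x$ is $axwP'$, which sits below its gadget paths; $a$ therefore ignores it and $a,b,c$ collapse to a plain Bad Gadget, which has no stable assignment, so neither does $(G,d'_{\hat\lambda^G})$.

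The step I expect to be the main obstacle is justifying that $a$ may rank $axwP$ above its gadget paths while ranking $axwP'$ below them, even though both begin with the edge $(a,x)$. This is legitimate precisely because the model only \emph{requires} distinct values for paths with \emph{different} first edges; paths sharing a first edge may receive any values, so the operator's partial ranking can place these two on opposite sides of the gadget and, being a strict separation, both $d$ and $d'$ must honour it. What remains is the routine safety argument composing the convergence of $N$ with the cascade $x\to a\to c\to b$.
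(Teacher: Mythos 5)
Your proof is correct, but it takes a genuinely different route from the paper's. The paper reuses the two-copy comparator of Theorem~\ref{thm:partial} wholesale: it sets $\hat{\lambda}^G_v = d_{\hat{\lambda}_v}$ (already a total order, hence reproduced by any decision process) on the reference copy $N$, leaves the partial ranking $\hat{\lambda}_v$ intact on $N'$, and fixes total orders on $x,a,b,c$, so that applying $d$ or $d'$ to $\hat{\lambda}^G$ yields exactly the two partial deployments of Theorem~\ref{thm:partial}; node $x$ then detects whether $N'$ agrees with the reference, and the whole theorem becomes a two-line corollary of the earlier one. You instead delete the reference copy entirely and let $a$ itself distinguish the two outcomes, by ranking $axwP$ above the Bad Gadget paths and $axwP'$ below them. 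The obstacle you flag --- giving different ranks to two paths sharing the first edge $(a,x)$ --- is resolved exactly as you say: the model forces distinct ranks only for paths with \emph{different} first edges, and the paper's own node $x$ already relies on this by placing $xw'P'_1$ strictly between $xwP_1$ and $xwP_2$. Your construction is more economical (one copy of $N$, smaller gadget) but requires redoing the convergence/divergence analysis rather than quoting it; the paper's pays for the extra copy with a fully modular reduction. One step you should state explicitly, though the paper is no more careful here: when you extend $\hat{\lambda}$ on $N$ by ranking all paths leaving $N$ below $\varepsilon$, you implicitly assume the decision process's ordering of the $N$-internal paths is unaffected by the presence of these extra strictly-dominated options --- the same locality assumption the paper makes when it identifies $d_{\hat{\lambda}^G}$ on $N'$ with a partial deployment of $d_{\hat{\lambda}}$.
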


\begin{proof} Let $N$ and $\hat{\lambda}$ be such that $(N,d_{\hat{\lambda}})$ and $(N,d'_{\hat{\lambda}})$ are safe, but their stable states differ.  We construct $G$ based on $N$ as in the proof of Theorem~\ref{thm:partial}.  Let $\lambda^G_{v}$ be the ranking functions constructed in the proof of Theorem~\ref{thm:partial} with $\lambda = d_{\hat{\lambda}}$ and $\lambda' = d'_{\hat{\lambda}}$. Define the partial ranking functions $\hat{\lambda}^G$ as  
\[
\hat{\lambda}^G_{v}=\left\{ \begin{array}{ll}
\lambda^G_{v} & \mbox{if }v\in \{x,a,b,c\}\\
d_{\hat{\lambda}_v} & \mbox{if }v\in N\\
\hat{\lambda}_{v} & \mbox{if }v\in N'.\end{array}\right.\]
Note that applying one of the decision processes to $\hat{\lambda}^G$ can only vary its behavior for $v\in N'$. With this construction, applying the decision process $d$ yields ranking functions $d_{\hat{\lambda}^G}$ which are exactly equivalent to $\lambda^G$ with a partial deployment of $d_{\hat{\lambda}}$ on $N'$.  Likewise, $d'_{\hat{\lambda}^G}$ is exactly equivalent to $\lambda^G$ with a partial deployment of $d'_{\hat{\lambda}}$ on $N'$.  Therefore, the result follows by the argument of the proof of Theorem~\ref{thm:partial}. \hfill{} \end{proof}

\section{Extensions}

Our results dealt with decision processes which differ in their final stable state.  One can also show that if there is any difference in path selections in two ranking functions $\lambda$ and $\lambda'$ {\em at any moment} during the dynamic convergence process, then for a particular activation sequence, a partial deployment of $\lambda'$ will diverge while a partial deployment of $\lambda$ will converge (or vice versa).  This involves inserting a gadget (essentially Disagree~\cite{griffin02stable}) between $x$ and $a$ in the construction of Fig.~\ref{fig:gadget1}, to ``remember'' that some difference has occurred in the past.  In one sense, this is stronger than our previous results, as it applies even to transient differences between $\lambda$ and $\lambda'$.  However, it is less satisfying since it needs an activation sequence that runs $N$ and $N'$ in lockstep.  With any variation in timing, even two copies of $\lambda$ could be judged to be different at some moments in time.

One could consider more general models of the BGP decision process, perhaps treating it as a state machine with memory, in order to model features such as route flap damping~\cite{rfc2439} which change their preferences across time. Since our theorems use $\lambda$ and $\lambda'$ essentially as black boxes, such extensions may be straightforward.

\section{Acknowledgements}

We thank Alex Fabrikant, Michael Schapira, and Scott Shenker for helpful comments.

\bibliographystyle{plain}
\bibliography{paper}

\end{document}